\documentclass{fundam}

\usepackage[ruled,noline,linesnumbered, noend]{algorithm2e}
\usepackage{amssymb,amsmath}
\usepackage{babel}
\usepackage[symbol]{footmisc}
\usepackage{graphicx}
\usepackage{placeins}
\usepackage{subcaption}
\usepackage{tikz}
    \usetikzlibrary{patterns}
\usepackage{url}
\usepackage{xcolor}
\usepackage{hyperref}

\usepackage{float}

\begin{document}
\title{A Fault-Tolerant Version of Safra's Termination Detection Algorithm}

\author{
\!\!Wan Fokkink\\
Department of Computer Science, Vrije Universiteit Amsterdam \and
Georgios Karlos\\
Department of Computer Science, Paderborn University \and
Andy S.\ Tatman\\
Bernoulli Institute, University of Groningen}

\runninghead{W. Fokkink, G. Karlos, A. Tatman}{A Fault-Tolerant Version of Safra's Termination Detection Algorithm}

\maketitle

\begin{abstract}
A distributed algorithm in a message-passing framework has terminated if all nodes in the network have become passive and there are no messages in transit. Safra's distributed termination detection algorithm employs a logical token ring structure within a distributed network; only passive nodes forward the token, and a counter in the token keeps track of the number of sent minus the number of received messages. We adapt this classic algorithm to make it fault-tolerant, meaning that it can cope with node crashes, which are eventually detected by the alive nodes. The counter is split into counters per node, to discard counts from crashed nodes. If a node crashes, the token ring is restored locally and a backup token is sent. Nodes inform each other of detected crashes via the token. Our algorithm imposes no additional message overhead, tolerates any number of crashes as well as simultaneous crashes, and copes with crashes in a decentralized fashion. Correctness proofs are provided of both the original Safra's algorithm and its fault-tolerant variant, as well as a model checking analysis.
\end{abstract}

\keywords{Termination detection, Distributed algorithms, Fault tolerance, Model checking}

\section{Introduction}
\label{sec:introduction}

Termination detection, a fundamental problem in distributed systems, was introduced independently in \cite{DijSch80} and \cite{Fra80}. In a message-passing distributed system executing a global computation, termination occurs when two conditions simultaneously hold across the network: every node has completed its local processing and transitioned into a passive state, and all in-flight messages sent prior to passivity have been received and processed. Distributed termination detection is applied in e.g.\ workpools, routing, diffusing computations, self-stabilization, and checking stable system properties such as deadlock and garbage in memory. Many (mostly failure-sensitive) termi\-nation detection algorithms have been proposed in the literature, see \cite{Mattern87,MatCam98}.
Following \cite{Fok18,Tel00}, we call the distributed algorithm for which termination is checked basic and the termination detection algorithm the control algorithm. Likewise, we distinguish basic and control messages.

In Safra's termination detection algorithm \cite{Dij87,FeiGas99} for asynchronous message-passing networks with non-FIFO channels, a designated initiator node, when it has become passive, sends out a white token that visits all nodes in the network via a predetermined logical ring structure; each non-initiator node passes on the token when it is passive. Each node keeps track of the number of outgoing minus incoming basic messages, and these counts are accumulated in the token. Nodes that receive a basic message are colored black, as the count in the token may be unreliable, if the message overtook the token in the ring, because its receipt is recorded in the token but not yet its send. The black color is transferred to the token at its next visit. If the token returns to the initiator white and with counter $0$, the initiator can announce termination. Else the initiator sends out a token again when it is passive.

Safra's algorithm imposes only little control message overhead when nodes remain active over a long period of time, unlike termination detection algorithms for which every control message needs to be acknowledged (e.g.\ \cite{DijSch80}). Additionally, it does not require control messages to be sent out when nodes become passive and does not run into underflow issues, as opposed to weight-throwing schemes (\cite{Mat89,Tse95}). In \cite{DemAro00} an optimized version of Safra's algorithm was proposed, that does not always color receiving nodes black and detects termination within a single round trip of the token after actual termination has occurred.

We propose a fault-tolerant termination detection algorithm based on the improved version of Safra's algorithm from \cite{DemAro00}, meaning that it can cope with nodes crashing permanently. In this setting, termination is defined as follows \cite{Fok18}: An execution of a basic algorithm has terminated if all alive nodes are passive and for all basic messages in transit, the destination node either crashed or knows that the sender crashed. In our algorithm a node crash is handled locally by its predecessor in the ring; a new token is issued, as the old token may have been lost in the crash. A numbering scheme in the token makes sure only a single token is being passed on; if the old token was not lost in the crash, the new token will be dismissed. Only the basic message exchange between alive nodes is counted. For this purpose each alive node keeps track how many basic messages it sent to minus received from each other node individually, so that such counts can be discarded if another node crashes. Likewise, the counter in the token is split into one counter per node. Each node has a failure detector and informs other nodes of crashes reported by its failure detector via the token, so that the alive nodes uniformly count the basic message exchange with the same set of alive nodes. An alive node reporting a crashed node ensures that the token completes another round trip, to avoid inconsistent message counts in the token.

Next to the aforementioned strong points of Safra's algorithm, our fault-tolerant variant has some additional advantages, compared to existing fault-tolerant termination detection algorithms which will be discussed in Section \ref{sec:related}.
First, the basic algorithm can be decentralized, meaning there can be multiple initiator nodes. Second, if the initiator of our termination detection algorithm crashes before sending out the first token, this role is automatically taken over by its predecessor. Thus our algorithm can cope with any number of simultaneous node crashes. Third, only one additional token message is required for each crash, next to possibly an additional round trip of the token. No relatively expensive schemes like leader election or taking a global snapshot are employed. The price to pay is that, in the absence of stable storage, the bit complexity of a message is $\Theta(N)$ with $N$ the number of nodes in the network, compared to $\Theta(1)$ for the failure-sensitive version of Safra's algorithm. Considering current network technologies, with Gbits/second throughput and microseconds latency, this token size incurs a tolerable overhead in network load, especially since the token is only forwarded by idle nodes.

We provide correctness proofs of both the failure-sensitive and the fault-tolerant version of Safra's algorithm. Moreover, an extensive model checking exercise was performed using the mCRL2 tool, revealing a subtle bug in the original fault-tolerant version of Safra's algorithm presented in \cite{KFF21}. A simple fix is proposed for this flaw.

An earlier version of this paper appeared as \cite{KFF21}. Compared to that paper, the main additions are the correctness proofs and the model checking analysis. Developing our algorithm was a delicate matter. Still, owing to the correctness proofs in combination with the model checking analysis, we can confidently claim that our algorithm correctly detects termination. In \cite{KFF21} moreover experimental results in a multi-threaded emulation environment were presented. Our algorithm was tested on two fault-tolerant distributed algorithms from the literature, for networks up to 144 nodes. Compared to the failure-sensitive version of Safra's algorithm, the fault-tolerant version turned out to exhibit a satisfactory performance, in the sense that it imposes no additional control message overhead. Of course it does impose some overhead, by adding extra concurrency at each node and additional synchronization. However, even with a large number of failures, profiling of the experiments showed that the execution time of the basic algorithm remains the dominant factor for the overall performance.

The paper is structured as follows. Section \ref{sec:related} discusses related work and Section \ref{sec:model} the system model. In Section \ref{sec:safra} the improved version of Safra's termination detection algorithm is presented, and in Section \ref{sec:fault-tolerant} our fault-tolerant version, both accompanied with pseudocode descriptions and correctness proofs. Our model checking analysis is discussed in Section \ref{sec:mcrl2}. The paper is concluded in Section \ref{sec:conclusion}.
\section{Related Work}\label{sec:related}

We discuss some existing fault-tolerant termination detection algorithms, mainly from a functional point of view. Generally, a \emph{perfect} failure detector \cite{ChandraToueg96} is required to solve termination detection in the presence of crashes \cite{HHMRT00,MFVP08}. Such a failure detector never falsely suspects that a node crashed and eventually detects each node crash.

Lai and Wu \cite{LaiWu95} proposed a fault-tolerant variant of the Dijkstra-Scholten algorithm \cite{DijSch80} for centralized basic algorithms, meaning there is a single initiator node. 
Active nodes  are in the tree, rooted in the initiator, which announces termina\-tion when the tree has disappeared. In the event of a crash, all alive nodes communicate with the designated root node, causing a sequential bottleneck. 

Lifflander et al.\ \cite{LMK13} presented a series of algorithms based on \cite{DijSch80} that avoid the bottleneck of \cite{LaiWu95}. These algorithms are resistant to single-node failures but are only probabilistically tolerant to multi-node failures and incur additional control messages even in crash-free executions. In case of a crash the tree is reconstructed locally. If two nodes fail concurrently, the algorithms may not be able to recover. The algorithm then detects that this is the case. Failure of the root node cannot be handled. Performance results are reported based on an experimental setup, consisting of three mock-up parallel algorithm implementations. Results show acceptable processing time overhead. Message overhead results are not reported.

Tseng \cite{Tse95} developed a fault-tolerant variant of weight-throwing \cite{Mat89} for centralized basic algorithms.
Nodes donate part of their weight to the basic messages they send. The receiver claims that weight on receipt. A node that becomes passive returns its weight to the leader,
who announces termination when it is passive and reclaimed its original weight. The number of control messages increases linearly with the number of basic messages. The algorithm is vulnerable to underflow of weight values and control messages require space to represent floats at high precision.
A global snapshot is taken when a new crashed node is detected. When the leader crashes, an election scheme is employed.

Venkatesan's algorithm \cite{Ven89} turns a failure-sensitive termination detection based on a spanning tree, with the leader node at the root in charge of announcing termination, into a fault-tolerant one. When a crash occurs, a global snapshot is taken. If the leader crashes, an election is held. To cope with $k$ simultaneous crashes, there must be $k$ backup leaders. The local stacks at the nodes must be continuously replicated by the leader and its backups. Upon learning of a crashed node, the leader and its backups simulate the local stacks of the nodes in the system.

Hursey and Graham \cite{HurGra11} developed a termination detection scheme for their fault-tolerant ring-based MPI application. Their algorithm relies on a leader election scheme and fault-tolerant primitives provided by MPI.

Mittal et al.\ \cite{MFVP08} introduced a general framework for transforming any failure-sensitive termination detection algorithm into a fault-tolerant variant that can cope with any number of node crashes. The basic idea is to restart termination detection after each node crash. When applied to existing failure-sensitive algo\-rithms, the resulting fault-tolerant algorithms 
have a significant overhead in control messages, even when no nodes become passive or crash.

In \cite{KFF21}, testing the behavior of fault-tolerant distributed algorithms on very large networks turned out to be challenging. Emulating basic algorithms by means of unrestricted randomization results in executions that refused to terminate on large networks and  do not faithfully mimic all aspects of real-life distributed algorithms. Moreover, in experiments on top of two actual algorithms, allocating multiple network nodes on a single compute node influences the results. These challenges may partly explain why \cite{LMK13} is the only related paper we are aware of to report experimental results, for networks of up to 2048 nodes.

Finally we mention two model checking analyses of termination detection algorithms. In \cite{KKM22} the non-optimised version of Safra's algorithm is analyzed using TLA$^+$. In \cite{HAN18} the fault-tolerant termination detection algorithm from \cite{Tse95} is analyzed using {\sc Uppaal}.
\section{System Model}
\label{sec:model}
We assume a fully asynchronous message-passing system with no shared memory or global clock. Messages may arrive in any order and delays are unbounded but finite. The $N$ nodes are \emph{logically} organized in a ring and are assigned unique, totally-ordered IDs.
Failures are permanent; once a node crashes, it halts and never recovers.
Like most fault-tolerant distributed algorithms in the literature, we require that the underlying physical network provides a reliable bidirectional communication channel between each pair of nodes.

Nodes are either \textit{active} or \textit{passive}. An active node can  send/receive basic messages, perform internal events, or become passive when it \textit{terminates locally}. A passive node cannot send basic messages or perform internal events and only becomes active upon receipt of a basic message. Termination may be announced while  basic messages from crashed nodes are in transit. It is therefore required that passive nodes never become active by the receipt of a basic message sent by a node they know has crashed. An execution of the basic algorithm has terminated if all alive nodes are passive and for all basic messages in transit, the destination node either has crashed or knows that the sender has crashed. Any algorithm for the termination detection problem must satisfy the following two properties. {\it Liveness}: if the system has terminated, this is eventually detected by an alive node; and {\it Safety}: when termination is detected, the system terminated at some point in the past.

In case nodes may crash, we assume a complete network topology and the presence of a perfect failure detector \cite{ChandraToueg96}. Such a failure detector, which never falsely suspects that a node crashed and eventually detects each node crash, can be built if there are known upper bounds on network latency and on process execution times, as well as bounded clock drift to accurately measure those bounds. A node is said to know that another node has crashed if its failure detector has reported this. Following \cite{Fok18}, we define that an execution of a basic algorithm has terminated if all alive nodes are passive and for all basic messages in transit, the destination node either crashed or knows that the sender crashed.
\section{Safra's Termination Detection Algorithm}
\label{sec:safra}

Safra's (failure-sensitive) termination detection algorithm \cite{Dij87,FeiGas99} generalizes the Dijkstra-Feijen-van Gasteren algorithm \cite{DFG83} from synchronous to asynchronous message passing networks. We give a detailed description of Safra's algorithm, including improvements from \cite{DemAro00}. This will serve as a basis for the description of our fault-tolerant version later on.

Safra's algorithm is centralized, with node $0$ as initiator. The basic algorithm to which Safra's algorithm is applied however is allowed to be decentralized and does not need to be ring-based. A token $t$ circulates the ring, starting at the initiator of the control algorithm when it becomes passive for the first time, and being forwarded by the other nodes once they are passive. The field $\textit{count}_{t}$ in $t$ represents the number of basic messages in transit during the round trip of $t$. Each node $i$ records in $\textit{count}_{i}$ the number of basic messages it sent minus the number of basic messages it received, since the last time it forwarded the token. Each time $t$ is received by a node $i$, $\textit{count}_{i}$ is added to $\textit{count}_{t}$, and  $\textit{count}_{i}$ is reset to zero. Upon return of $t$ to the initiator, after it has become passive, termination is detected if $\textit{count}_{t}$ is zero.

The token can during its round trip underestimate the number of basic messages in transit, if the receipt of a message is accounted for in the token before the send of this message. To recognize this, colors black and white are used. Initially all nodes are white, and when the initiator sends out a fresh token, the token is white. When a node $i$ receives a basic message $m$, it may be that the send of $m$ was not yet recorded in $\textit{count}_{t}$. Due to this uncertainty, termination cannot be announced at the end of the upcoming token round. Therefore, upon receipt of $m$, $i$ marks itself black. When $t$ visits a black node, $t$ becomes black and the node white; from then on $t$ remains black for the rest of the round trip. When the initiator has received back $t$, has become passive, and has added the value of $\textit{count}_{0}$ to $\textit{count}_{t}$, it decides whether termination can be detected. If $t$ is black or $\textit{count}_{t}$ is not zero, the initiator sends out a fresh white token again. Otherwise, it can safely announce that the execution of the basic algorithm has terminated.

Two enhancements of Safra's algorithm to reduce detection delay were given in \cite{DemAro00}. One inefficiency is that a basic message always blackens its receiver. Actually an inconsistent count in the token only occurs when the token records the receipt of a message before its send. This happens when a message overtakes the token, meaning that it is sent after the sender was visited by the token, but reaches the receiver before it is visited by the token. A second inefficiency of the original algorithm is that termination is detected only at the initiator. One enhancement allows detection to occur at any node. When multiple nodes can detect termination, there is a second situation in which an inconsistent count in the token can occur. When both the sender and the receiver of a message are ahead of the token, but the receiver will be visited by the token before the sender, it is possible for the receiver to detect termination before the sender is visited.

To deal with the aforementioned problematic scenarios, a sequence number $\textit{seq}_{i}$ is intro\-duced at every node $i$, starting at zero. When a node forwards the token, it increases its sequence number by one, so that nodes in the visited region have a higher sequence number from those in the unvisited region. A node piggybacks its sequence number $\textit{seq}_{m}$ to every basic message $m$ it sends. Using the sequence number, an offending message can be detected if it has a higher sequence number than the receiver, or they both have the same sequence number but the sender has a higher ID than the receiver. Since multiple nodes can detect termination, an offending message should not only blacken the receiver but also all subsequent nodes in the ring up to (but not including) the sender. At all these nodes, the token holds an inconsistent count. So none should detect termination.

The field $\textit{black}_{t}$ in the token is now a node ID, expressing that all nodes the token visits from now up to (but not including) $\textit{black}_{t}$ are black. When the token is sent by the initiator for the first time, $\textit{black}_{t}=N-1$, so that all nodes from $1$ up to $N-1$ are initially considered black. Hence termination can only be detected after the token has visited all nodes at least once. Likewise, $\textit{black}_{i}$ at a node $i$ represents that all nodes that the token visits from $i$ up to $\textit{black}_{i}$ are black. Initially $\textit{black}_{i}=i$ at all nodes $i$, meaning that $i$ considers all nodes white. If a node $i$ receives a basic message $m$ of which the send may not have been accounted for in the token, then $\textit{black}_{i}$ is set to the furthest node from $i$ among $\textit{black}_{i}$ and the sender of $m$. The function $\textit{furthest}_{i}(j,k)$ computes whether node $j$ or $k$ is furthest away from $i$ in the ring. It is defined by:\vspace*{-.5mm}
\[
\begin{array}{lcl}
\textit{furthest}_{i}(j,k) &=& \left\{
\begin{array}{ll}k &\mbox{if \,$i\leq j\leq k$\, or \,$k<i\leq j$\, or \,$j\leq k<i$}\vspace{.5mm}\\
j &\mbox{otherwise}
\end{array}\right.
\end{array}
\]

\vspace*{-1.5mm}

If the token reaches a node $i$, it must wait until $i$ is passive. Then $i$ adds the value of $\textit{count}_{i}$ to the value of $\textit{count}_{t}$. If $i$ is white, meaning that $\textit{black}_{t}=\textit{black}_{i}=i$, it can determine termination in the same way the initiator does in Safra's algorithm: check whether the value of $\textit{count}_{t}$ is zero. If $i$ is black or detects no termination, it forwards the token to its successor. Before doing so, it sets $\textit{black}_{t}$ to $\textit{furthest}_{i}(\textit{black}_{t},\textit{black}_{i})$ if this value is not $i$, or else to $(i+1)\bmod N$. The latter means the successor of $i$ in the ring will consider the token white. Finally, $i$ sets $\textit{count}_{i}$ to zero and $\textit{black}_{i}$ to $i$ and increases $\textit{seq}_{i}$ by one.

Algorithms \hyperref[algo:ifs]{1-4} present the pseudocode of the four procedures available at each node $i$ for the improved version of Safra's algorithm: initialization, sending/receiving a basic message $m$ to/from a node $j$ (SBM/RBM), and receiving a token (RT). Subscript $i$ of a procedure name represents the node where the procedure is performed. For variables, subscript $i$ means a local variable at node $i$ and subscript $t$ a variable in the token. Action $\textit{send}(m,j)$ denotes that message $m$ is sent to node $j$, and the Boolean field $\textit{passive}_{i}$ is $\textit{true}$ only when node $i$ is passive. In line 6 of RT, in the token $t$ that is sent, the values of $\textit{count}_t$ and $\textit{black}_t$ have been adapted, compared to the $t$ in the argument of RT. Within a single node, procedures are generally executed without interruption. The only exceptions to this may occur when nodes are waiting to become passive, in line 3 of Initialization and line 1 of RT, where a node is allowed to perform SBM and RBM calls.

\begin{minipage}{0.85\textwidth}
\begin{algorithm}[H]
    \label{algo:ifs}
    $\textit{count}_{i} \leftarrow 0$;
    ~~$\textit{black}_{i} \leftarrow i$;
    ~~$\textit{seq}_{i} \leftarrow 0$\;
    \If{$i=0$}
    {
        $\textit{wait}(\textit{passive}_{0})$\; \label{algoline:WaitInit}
        $\textit{count}_{t}\leftarrow \textit{count}_{0}$;
        ~~$\textit{black}_{t}\leftarrow N-1$;
        ~~$\textit{send}(t,1)$;
        ~~$\textit{count}_{0}\leftarrow 0$;
        ~~$\textit{seq}_{0} \leftarrow 1$\;
    }
    \caption{$\mbox{\rm Initialization}_{i}$}
\end{algorithm}
\end{minipage}

\begin{minipage}{0.85\textwidth}
\begin{algorithm}[H]
    \caption{$\mbox{\rm SendBasicMessage}_{i}(m,j)$~}
    $\textit{seq}_{m} \leftarrow \textit{seq}_{i}$;
    ~~$\textit{send}(m,j)$;
    ~~$\textit{count}_{i} \leftarrow \textit{count}_{i} + 1$\;
\end{algorithm}
\end{minipage}

\begin{minipage}{0.85\textwidth}
\begin{algorithm}[H]
\caption{$\mbox{\rm ReceiveBasicMessage}_{i}(m,j)$~}{}{}
    \If{$
    \textit{seq}_{m} = \textit{seq}_{i}+1
    \vee (j > i \wedge\,\textit{seq}_{m} = \textit{seq}_{i})$}
        {$\textit{black}_{i}\leftarrow\textit{furthest}_{i}(\textit{black}_{i},j)$\;}
    $\textit{count}_{i} \leftarrow  \textit{count}_{i} - 1$\;
\end{algorithm}
\end{minipage}

\begin{minipage}{0.85\textwidth}
\begin{algorithm}[H]
    \caption{$\mbox{\rm ReceiveToken}_{i} (t)$~}
    $\textit{wait}(\textit{passive}_{i})$\; \label{algoline:WaitRT}
    $\textit{count}_{t} \leftarrow \textit{count}_{t}+\textit{count}_{i}$;
    ~~$\textit{black}_{i}\leftarrow \textit{furthest}_{i}(\textit{black}_{i},\textit{black}_{t})$\;
    \If{$\textit{count}_{t}=0\wedge\textit{black}_{i}=i$}
    {
         {\rm Announce termination}\;
    }
    $\textit{black}_{t}\leftarrow\textit{furthest}_{i}(\textit{black}_{i},(i+1)\bmod N)$\;
    $\textit{send}(t, (i+1)\bmod N)$\; \label{algoline:SendRTSensi}

    $\textit{count}_{i} \leftarrow 0$;
    ~~~~$\textit{black}_{i} \leftarrow i$;
    ~~~~$\textit{seq}_{i} \leftarrow  \textit{seq}_{i}+1$\;
\end{algorithm}
\end{minipage}

\begin{example}
 We consider one possible run of the improved version of Safra's algorithm on a ring of three nodes in Figure \ref{example:fs}, with $\textit{count}_{i}=\textit{seq}_{i}=0$ and $\textit{black}_{i}=i$ at all nodes $i$. 
Initially (left), only node $2$ is active; it sends a basic message $m$ to node $1$, with its ID $2$ and its $\textit{seq}_{2}$ value $0$ attached, sets $\textit{count}_{2}$ to $1$, and becomes passive. The initiator $0$ sends out a token with $\textit{count}_{t}=0$ and $\textit{black}_{t}=2$, which travels via node $1$ to node $2$, where $\textit{count}_{t}$ is set to $1$. On the way, $\textit{seq}_{0}$, $\textit{seq}_{1}$ and $\textit{seq}_{2}$ are set to $1$ and $\textit{count}_{2}$ is set to $0$. At nodes $1$ and $2$ no termination is detected because $\textit{black}_{t}=2$ and $\textit{count}_{t}=1$, respectively. Node $2$ sends the token to the initiator with $\textit{count}_{t}=1$ and $\textit{black}_{t}=0$, leading to the following picture. Node IDs are placed within the nodes; white nodes are passive and gray nodes (in the picture further below) are active; the triple at each node represents the values of $\textit{count}$, $\textit{black}$ and $\textit{seq}$, respectively, at this node.

\begin{figure}[ht]

\begin{subfigure}{0.45\textwidth}
    \centering
    \scalebox{.9}{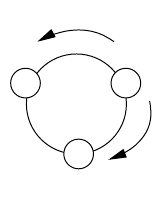}
\end{subfigure}
\begin{subfigure}{.45\textwidth}
    \centering
    \scalebox{.9}{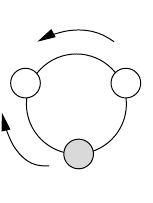}
\end{subfigure}
\caption{Example run on a fault-free network of 3 nodes.}
\label{example:fs}
\end{figure}
Next, $m$ reaches node $1$, making it active and setting $\textit{count}_{1}$ to $-1$; the value of $\textit{black}_{1}$ remains $1$ because the sender of $m$ is $2>1$ and $\textit{seq}_{m}=0\neq \textit{seq}_{1}$. 
Node $1$ sends a basic message $m'$ to the initiator (right), with its ID $1$ and its $\textit{seq}_{1}$ value $1$ attached, and sets $\textit{count}_{1}$ to $0$. $m'$ reaches the initiator before the token, making it active and setting $\textit{count}_{0}$ to $-1$. Moreover, $\textit{black}_{0}$ is set to $1$, since the sender of $m'$ is $1>0$ and $\textit{seq}_{m'}=1=\textit{seq}_{0}$. Now the initiator becomes passive again. The token reaches the initiator, where $\textit{count}_{t}$ and $\textit{count}_{0}$ are both set to $0$. No termination is detected because $\textit{black}_{0}=1$. The token travels on to node $1$ with $\textit{black}_{t}=1$. Next, this node becomes passive and announces termination because $\textit{count}_{t}=0$ and $\textit{black}_{1}=\textit{black}_{t}=1$.

\end{example}

At some points we deviated from the algorithm in \cite{DemAro00}. There, the second disjunct in line 2 of the pseudocode of procedure RBM is omitted, allowing for premature termination detection. Their approach is limited to centralized basic algorithms; in case of a decentralized algorithm, premature termination detection would be possible before the token has completed one round trip, because initially all nodes are white. They interpret sequence numbers modulo 2, so that a receiving node is spuriously blackened in situations where a basic message is overtaken by the token. We condensed $N$ Boolean fields into a single node index \textit{black} at the nodes and in the token, and $N$ integer fields in the token into a single counter.

We argue that the improved version of Safra's algorithm correctly detects termination, since we modified the algorithm in \cite{DemAro00}. The proof sketch will be a corner stone for the correctness argument of our fault-tolerant algorithm in the next section.

\begin{theorem}The improved Safra's algorithm is a correct termination detection algorithm\end{theorem}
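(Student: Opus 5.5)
We prove the two properties separately, liveness being the easy half. For \emph{liveness}, suppose the basic computation has terminated at some moment. From then on no basic messages are sent or received, so no $\textit{black}_i$ is ever moved away from $i$ again and no $\textit{count}_i$ changes except by being reset when the token passes. Since every node is passive, the token is never blocked at line 1 of RT. Consider the first node $k$ that the token visits after termination. We would show that within at most two further round trips the token satisfies two conditions. First, $\textit{black}_t$ has shrunk to the next node, because each visit resets $\textit{black}_i$ to $i$ and no new blackening happens. Second, $\textit{count}_t$ equals the sum of all $\textit{count}_i$ over the last full round trip. Because no messages are in transit, the key invariant below gives that this sum is $0$. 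Hence some node with $\textit{black}_i=i$ and $\textit{count}_t=0$ announces termination.

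For \emph{safety} we use the standard invariant approach from Safra's original proof, adapted to the sequence numbers. Fix a round in which the token travels from some node up to node $i$. We split the nodes into the visited region, which has $\textit{seq}$ increased, and the unvisited region. The invariant states that $\textit{count}_t+\sum_{j\text{ unvisited}}\textit{count}_j$ plus the counts accumulated since the last visit equals the number of messages in transit. This holds because each send or receive changes exactly one local counter, and forwarding only moves value into the token. In a simpler form: the sum over all $j$ of $\textit{count}_j$, together with the relevant token contributions, equals the number of messages in transit.

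Suppose node $i$ announces termination at some visit with $\textit{count}_t=0$ and $\textit{black}_i=\textit{black}_t=i$. We look at the last full round trip, from $(i+1)\bmod N$ around to $i$; such a round exists, because the initial $\textit{black}_t=N-1$ prevents announcement in the first round. Each node $j$ on this round was passive when the token left it. We claim the configuration at the moment of announcement is terminated. Assume instead that some node became active after the token passed it. Take the earliest such reactivation; it must be caused by a basic message $m$ sent by a node $k$ that was still active at that time. There are two cases. If $k$ was not yet visited in this round, then $\textit{seq}_m$ equals the old sequence number while the receiver already has the new one. If $k$ was already visited but was reactivated, this contradicts minimality. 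So the only activity comes from messages that were in transit or sent by not-yet-visited senders. The counting invariant then shows that $\textit{count}_t=0$ forces every in-transit message to be balanced by a receipt that the token already recorded. Such a message's receipt was recorded before its send, which means the message overtook the token: its sender $k$ has $\textit{seq}_m=\textit{seq}_{\text{receiver}}+1$, or equal sequence numbers with $k>\text{receiver}$. By line 2 of RBM, the receiver's $\textit{black}$ then extends to $k$. The $\textit{furthest}$ propagation through $\textit{black}_t$ carries this blackness to every node up to, but excluding, $k$, and in particular to $i$ if $i$ lies strictly between. This contradicts the fact that $i$ is white.

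The main obstacle is making the case analysis on sequence numbers airtight. We must show that every ``bad'' message is either an overtaking message, which gets detected by the $\textit{seq}_m=\textit{seq}_i+1$ test, or a same-round message whose receiver comes before its sender in token order, which gets detected by the $j>i$ test with wraparound. We must also show that the $\textit{furthest}_i$ updates preserve the invariant ``all nodes from the token's current position up to $\textit{black}_t$ are suspect''. We would first formalize this second invariant by induction over RT and RBM steps, and then derive the contradiction above from it.
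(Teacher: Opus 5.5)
Your outline follows the paper's proof. Liveness holds because, after termination, $\textit{count}_t$ settles at $0$ and every $\textit{black}_j$ returns to $j$. Safety comes from reading $\textit{count}_t$ at $i$ as a sum of $+1$'s (send recorded, receipt not) and $-1$'s (receipt recorded, send not). Non-termination forces a $+1$; your earliest-reactivation argument justifies this more carefully than the paper does. Every $-1$ must come from an overtaking message that blackens $i$.

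The gap is in that last step. You only conclude that the blackness reaches $i$ ``if $i$ lies strictly between'' receiver and sender, and you never show that it does. This is the paper's condition (2), $\textit{furthest}_k(i,j)\neq i$. Without it, a $-1$ does not contradict the whiteness of $i$. It follows from the timing. Suppose the receipt at $k$ of $m$ from $j$ is recorded but the send is not. If $k=i$, then RBM$_i$ itself made $\textit{black}_i\neq i$, and nothing resets it before line 3 of RT$_i$. Otherwise the receipt lies after $j$'s last forwarding (it follows the send) and before $k$'s last forwarding. So $j$ precedes $k$ in the sweep $i+1,\ldots,i-1$ that brought the token to $i$. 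Walking forward from $k$ one meets $i$ before $j$. Since $j$'s last forwarding lies in this sweep, the receipt also follows $k$'s previous forwarding, so the blackening of $k$ is still in place when $k$ last forwards. Your $\textit{black}_t$ invariant then makes $i$ black.

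Also be precise about $\textit{count}_t$. In this version the token counter is never reset: Initialization sets it once, and RT only adds to it. So the invariant you need is that $\textit{count}_t+\sum_j\textit{count}_j$ equals the number of messages in transit, with $\textit{count}_t$ cumulative since the start. Your liveness remark that $\textit{count}_t$ ``equals the sum of all $\textit{count}_i$ over the last full round trip'' describes a per-round token. Under that reading the safety step would fail: a message sent in an earlier round and received before its receiver's visit in the current round would give a $-1$ that the RBM test does not flag.

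For your ``main obstacle'', let $(s,x)$ denote the visit that raises $\textit{seq}_x$ from $s$ to $s+1$. Token visits occur in lexicographic order of these pairs. A receipt at $k$ with $\textit{seq}_k=s'$ is therefore recorded before the send of $m$ from $j$ with $\textit{seq}_m=s$ iff $(s',k)<(s,j)$. Since $s'\geq s-1$ always holds, this is exactly the RBM test, and no wraparound is needed in the comparison $j>k$.
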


\begin{proof}
First we argue {\it Liveness}: if the basic algorithm has terminated, then eventually termination will be announced. Since all nodes are passive, they all proceed beyond $\textit{wait}(\textit{passive}_{i})$ in line 1 of RT$_i$ and pass on $t$. The counter of $t$ accumulates all changes in the counters at the nodes (by line 2 and $\textit{count}_{i} \leftarrow 0$ in line 7 of RT$_i$). Since the sender of a basic message increases its counter by $1$ (line 1 of SBM$_i$) and the receiver decreases its counter by $1$ (line 3 of RBM$_i$), and there are no basic messages in flight, $\textit{count}_{t}$ eventually (and permanently) becomes $0$. Furthermore, since RBM$_i$ will not be called anymore at any node, $\textit{black}_{i}$ and $\textit{black}_{t}$ are only mutated in lines 2,7 and line 5 of RT$_i$, respectively. Therefore eventually $\textit{black}_{i}$ permanently becomes $i$, while $t$ always reaches a node $i$ with $\textit{black}_{t}=i$. Hence eventually $t$ reaches a node $i$ with $\textit{count}_{t}=0$ and $\textit{black}_{i}=i$, so that termination is announced (lines 3-4 of RT$_i$).

Now we argue {\it Safety}: if termination is announced, then the basic algorithm has terminated. Let some node $i$ reach line 3 of RT$_i$ with $\textit{black}_{i}=i$. We argue that then $\textit{count}_{t}\geq 0$. Namely, the only way this value could be negative at this point is if some basic message $m$ was accounted for in $\textit{count}_{t}$ at its receiver $k$, but not at its sender $j$, meaning that (1) $m$ overtook $t$, and (2) $j$ is not between $k$ and $i$, i.e. $\textit{furthest}_{k}(i,j)\neq i$. By (1), when $k$ received $m$, the condition in line 1 of RBM$_k$ would be true, so by line 2 of RBM$_k$ together with (2), $\textit{black}_{k}$ would be pushed beyond $i$. So at the last visit of $t$ to $k$, $\textit{black}_{t}$ was pushed beyond $i$. Then $\textit{black}_{i}$ would have been pushed beyond $i$ at line 2 of RT$_i$, contradicting the assumption that $\textit{black}_{i}=i$ at line 3 of RT$_i$. Suppose now that at line 3 of RT$_i$ the basic algorithm has not yet terminated. We argue that then $\textit{black}_{i}=i$ implies $\textit{count}_{t}>0$. There are two cases to consider. The first case is that some node is still active. Since nodes must be passive to forward $t$ and $i$ is passive (by $\textit{wait}(\textit{passive}_{i})$ in line 1 of RT$_i$), some node $k\neq i$ was made active after it forwarded $t$ for the last time, by a basic message $m$ from a node $j$ with $\textit{furthest}_{k}(i,j)=i$, while $t$ was traveling from $k$ to $j$. Since sending $m$ by $j$ is accounted for in $\textit{count}_{t}$ but its receipt at $k$ is not, by the time $t$ reaches $i$, $\textit{count}_{t}>0$. The second case is that a basic message is traveling from a node $j$ to a node $k$. Either sending this message was accounted for in $\textit{count}_{t}$, implying that its value at $i$ is positive, or $j\neq i$ became active after forwarding $t$ for the last time. Then it follows as in the first case that $\textit{count}_{t}>0$. 
\end{proof}
\section{Fault-Tolerant Version of Safra's Algorithm}
\label{sec:fault-tolerant}

From now on we assume nodes may spontaneously and permanently crash.
It is customary to assume for fault-tolerant distributed algorithms that there is a bidirectional channel between each pair of distinct nodes (see e.g.\ \cite{Tel00}), because else a node failure may result in disconnected subnetworks. Actually, it suffices if at any time a channel can be established between any two alive nodes.

Mittal et al.\ \cite{MFVP08} showed that a perfect failure detector is required to solve termination detection in the presence of failures.  In a fully asynchronous setting, such a detector cannot be built. As discussed at the end of Section \ref{sec:model}, a practical compromise is to assume an upper bound on the network latency. Each node sends out heartbeat messages at regular time intervals. When a node $i$ has not received a heartbeat from another node $j$ within some time interval, then $i$ permanently considers $j$ as crashed.

Each node $i$ stores the IDs of nodes it knows have crashed (either via its failure detector or from information in the token) in one of the sets $\textit{\sc Crashed}_{i}$ and $\textit{\sc Report}_{i}$. The latter contains the node IDs that $i$ has not yet reported to the other alive nodes by means of the token. When $i$ forwards the token, it adds all node IDs in $\textit{\sc Report}_i$ to a set $\textit{\sc Crashed}_t$ in the token and moves all those IDs from $\textit{\sc Report}_i$ to $\textit{\sc Crashed}_i$. Moreover, when $i$ receives the token, it removes all node IDs in $\textit{\sc Crashed}_i$ from $\textit{\sc Crashed}_{t}$, to avoid that it announces the same crashed node multiple times, and adds all node IDs in $\textit{\sc Crashed}_t$ to $\textit{\sc Crashed}_i$.

Since counts of messages to and from crashed nodes need to be discarded, the token contains $N$ counters, one per node; moreover, each node needs to count its message exchange with each other node separately and from the start of the execution run (instead of since the last token visit). So we split the field $\textit{count}_{i}$ for each node $i$ into a sequence $[\textit{count}_{i}^{0},\ldots,\textit{count}_{i}^{N-1}]$. For each node $j$, the field $\textit{count}_{i}^{j}$ stores the number of basic messages $i$ has sent to $j$ minus the number of basic messages $i$ has received from $j$. (The fields $\textit{count}_{i}^{i}$ are redundant as they always carry the value zero.)  If (the failure detector of) $i$ detects that a node $j$ has crashed, then $i$ permanently disregards the value of $\textit{count}_{i}^{j}$.
Likewise, to separately keep track of the counters at the different nodes in the token, the field $\textit{count}_{t}$ is split into a sequence $[\textit{count}_{t}^{\,0},\ldots,\textit{count}_{t}^{N-1}]$. If these counters were lumped together into a single counter $\textit{count}_{t}$, and say a node $i$ sent a basic message to a node $j$ which then crashed, there might be no way of telling whether or not $j$ received this message and updated $\textit{count}_{j}^{i}$ and $\textit{count}_{t}$.

As explained before, if a node $i$ learns from its failure detector that some other node $j$ crashed, it shares this information with the other alive nodes via the token. This is important because else there would be the risk that although $i$ from now on disregards $\textit{count}_{i}^{j}$, some other alive node $k$ may still take into account $\textit{count}_{k}^{j}$, which could lead to a premature termination detection at $k$.

Each node $i$ keeps track of its successor $\textit{next}_{i}$ in the ring; 
initially $(i+1)\bmod N$. Each time $i$ detects $\textit{next}_{i}$ has crashed, the value of this field is changed into $i$'s nearest alive successor.
We must ensure that the token is not lost; this could happen if the token was traveling to or being handled by $\textit{next}_{i}$ at the moment it crashed. Therefore, after having determined its new successor ${\textit{next}_{i}}'$,
$i$ forwards the token once again, to ${\textit{next}_{i}}'$. For this purpose $i$ stores 
the last token it forwarded. These local variables are updated as soon another token (with a higher sequence number) arrives.

In case $\textit{next}_{i}$ forwarded the token before crashing, ${\textit{next}_{i}}'$ could in this round receive a token both from $i$ and from $\textit{next}_i$. Therefore the token has a sequence number $\textit{seq}_{t}$, which is increased by one at each consecutive round trip of the token. In the first round $\textit{seq}_{t} = 1$. Each node $i$ keeps track of the highest sequence number it has passed on so far in $\textit{seq}_{i}$ (initially $\textit{seq}_{i} = 0$), and ignores incoming tokens with $\textit{seq}_{t} \leq \textit{seq}_{i}$. The last node in the ring, initially $N-1$, increases the sequence number every time it forwards the token. If it crashes, this task is taken over by its predecessor. A node $i$ can determine whether it is the last node by checking if $\textit{next}_{i}<i$.

As in the failure-sensitive variant of Safra's algorithm, $\textit{black}_{t}$ and $\textit{black}_{i}$ express which nodes are considered black, and when the token is sent by the initiator for the first time, $\textit{black}_{t}=N-1$ to guarantee it visits all nodes at least once. If a node receives an offending basic message, it colors all nodes in the ring  between itself and the sender black. If the failure detector of a node $i$ reports a crashed node and, at the next token visit, $i$ does not detect termination, then $i$ colors all other nodes black, as they must all be visited by the token.

The pseudocode of the procedures
is given in Algorithms \hyperref[algo:fts]{5-10}.
Again, the procedures should be executed without interruption, except that while waiting to become passive, in line 2 of ReceiveToken or line 5 of NewSuccessor, a node may perform SendBasicMessage, ReceiveBasicMessage, and FailureDetector calls.

In the initialization phase, nodes provide their local variables with initial values; node $0$ holds the token. Each node stores the values of the initial token in local variables with the subscript \textit{tBackup}.

\begin{minipage}{0.85\textwidth}
\setcounter{algocf}{4}
\begin{algorithm}[H]
    \label{algo:fts}
    \caption{$\mbox{\rm Initialization}_{i}$~}
    \For{$j=0$\, \mbox{\bf to} $N-1$}
    {
    $\textit{count}_{i}^{j} \leftarrow 0$;
    ~$\textit{count}_{tBackup}^{j} \leftarrow 0$\;
    }
    $\textit{black}_{i} \leftarrow i$; 
    ~~$\textit{seq}_{i} \leftarrow 0$;
    ~~$\textit{next}_{i} \leftarrow (i+1) \bmod N$\;
    $\textit{black}_{tBackup} \leftarrow i$; 
    ~~$\textit{seq}_{tBackup} \leftarrow 0$;
    ~~$\textit{\sc Crashed}_{tBackup} \leftarrow \emptyset$\;
    $\textit{\sc Crashed}_{i}\leftarrow\emptyset$;
    ~$\textit{\sc Report}_{i}\leftarrow\emptyset$\;
    \If{$i=0$\,}
    {
    \For{$j=0$\, \mbox{\bf to} $N-1$}
    {
    ~$\textit{count}_{t}^{j} \leftarrow 0$\;
    }
    $\textit{black}_{t} \leftarrow N-1$;
    ~~$\textit{seq}_{t} \leftarrow 1$;
    ~$\textit{\sc Crashed}_{t}\leftarrow\emptyset$\;
    ReceiveToken$_{0}$\;
    }
\end{algorithm}
\end{minipage}

At sending or receiving a basic message to or from an alive node, the sender or receiver updates the corresponding counter.

\begin{minipage}{0.85\textwidth}
\begin{algorithm}[H]
    \caption{$\mbox{\rm SendBasicMessage}_{i}(m,j)$~}
    \If{$j \notin \textit{\sc Crashed}_{i}\cup\textit{\sc Report}_{i}\cup\textit{\sc Crashed}_{t}$}{
        $\textit{seq}_{m} \leftarrow \textit{seq}_{i}$;
        ~~$\textit{send}(m,j)$;
        ~~$\textit{count}_{i}^{j} \leftarrow \textit{count}_{i}^{j} + 1$\;
    }
\end{algorithm}
\end{minipage}

Basic messages received from a crashed node in $\textit{\sc Report}_{i}$ may still be accounted for by $i$ in the control algorithm, to allow for termination detection at the next token visit to $i$. If the receipt of a message is accounted for in the token before its send, the receiver colors the nodes up to the sender black.

\begin{minipage}{0.85\textwidth}
\begin{algorithm}[H]
    \caption{$\mbox{\rm ReceiveBasicMessage}_{i}(m,j)$~}
    \If{$j \notin \textit{\sc Crashed}_{i}$}
    {
    \If{$\textit{seq}_{m} = \textit{seq}_{i}+1
    \vee (j > i \wedge\,\textit{seq}_{m} = \textit{seq}_{i})$}
        {$\textit{black}_{i}\leftarrow\textit{furthest}_{i}(\textit{black}_{i},j)$\;}
    $\textit{count}_{i}^{j} \leftarrow  \textit{count}_{i}^{j} - 1$\;
    }
\end{algorithm}
\end{minipage}
    
Procedure ReceiveToken$_{i}$ (RT$_i$) is executed when a token arrives at node $i$. It only proceeds if $i$ did not receive an instance of this token before (line 1). It then waits until it becomes passive,
because in the meantime the values of $\textit{count}_{i}^{j}$, $\textit{black}_{i}$ and $\textit{\sc Report}_{i}$ may still change. Once passive,
$\textit{black}_{i}$ is set to the furthest of $\textit{black}_{i}$ and $\textit{black}_{t}$ (line 2). Then, the set $\textit{\sc Crashed}_{t}$ is relieved of the nodes that $i$ reported through the token before (line 3). The remaining nodes in $\textit{\sc Crashed}_{t}$ are copied to $\textit{\sc Crashed}_{i}$, because they will be reported when $i$ forwards $t$ (line 4). $\textit{\sc Report}_{i}$ is relieved of nodes in $\textit{\sc Crashed}_{t}$ (line 5). The  values $\textit{count}_{i}^{j}$ for nodes $j\notin\textit{\sc Crashed}_{i}$ are accumulated in $\textit{count}_{t}^{i}$ (lines 7-9); but only if $i$ is white or $\textit{\sc Report}_{i}$ is empty (line 6), because then it may be employed in termination detection at $i$ (in lines 10-15) or at other nodes, respectively. If $i$ is white (line 10), the values $\textit{count}_{t}^{j}$ for nodes $j\notin\textit{\sc Crashed}_{i}$ are accumulated in an auxiliary field $\textit{sum}_{i}$ (lines 11-13); if this sum is $0$, $i$ announces termination (lines 14-15). If no termination is detected, then $i$ checks whether its successor is in $\textit{\sc Crashed}_{t}$; if so, NewSuccessor$_{i}$ is called to select another successor (lines 16-17). Next, $i$ checks whether it is the last node in the ring, and if so increases the sequence number of $t$ by $1$ (lines 18-19). If $\textit{\sc Report}_{i}$ is nonempty (line 20), then it is added to $\textit{\sc Crashed}_{t}$, so that $t$ will report these crashed nodes to all alive nodes; $\textit{black}_{t}$ is set to $i$, to ensure that the token visits all nodes up to $i$ again before termination can be detected, as all alive nodes must first achieve a consistent view on the set of crashed nodes (line 21). Next all nodes in $\textit{\sc Report}_{i}$ are moved to $\textit{\sc Crashed}_{i}$ (line 22). If $\textit{\sc Report}_{i}$ is empty, then $\textit{black}_{t}$ is set to $\textit{furthest}_{i}(\textit{black}_{i},\textit{next}_{i})$ (lines 23-24). Finally, $i$ forwards $t$ to $\textit{next}_{i}$ (line 25), makes copies of the token variables (lines 26-28), colors itself white, and increases $\textit{seq}_{i}$ by one (line 29).

\begin{minipage}{0.85\textwidth}
\begin{algorithm}[H]
    \caption{$\mbox{\rm ReceiveToken}_{i} (t)$~}
    \If{$\textit{seq}_{t}=\textit{seq}_{i}+1$}
    {
    $\textit{wait}\;(\textit{passive}_{i})$;
    ~~$\textit{black}_{i}\leftarrow \textit{furthest}_{i}(\textit{black}_{i},\textit{black}_{t})$\;
    $\textit{\sc Crashed}_{t}\leftarrow\textit{\sc Crashed}_{t}\setminus\textit{\sc Crashed}_{i}$\;
    $\textit{\sc Crashed}_{i}\leftarrow\textit{\sc Crashed}_{i}\cup\textit{\sc Crashed}_{t}$\;
    $\textit{\sc Report}_{i}\leftarrow\textit{\sc Report}_{i}\setminus\textit{\sc Crashed}_{t}$\;
    \If{$\textit{black}_{i}=i\,\vee\,\textit{\sc Report}_{i}=\emptyset$}
    {
    $\textit{count}_{t}^{i}\leftarrow 0$\;
         \For{{\rm all} $j\in\{0,\ldots,N{-}1\}\setminus\textit{\sc Crashed}_{i}$}
            {$\textit{count}_{t}^{i}\leftarrow \textit{count}_{t}^{i}+\textit{count}_{i}^{j}$\;}
    }
    \If{$\textit{black}_{i}=i$}
        {$\textit{sum}_{i}\leftarrow 0$\;
         \For{{\rm all} $j\in\{0,\ldots,N{-}1\}\setminus\textit{\sc Crashed}_{i}$}
            {$\textit{sum}_{i}\leftarrow \textit{sum}_{i}+\textit{count}_{t}^{j}$\;}
        \If{$\textit{sum}_{i}=0$}
         {{\rm Announce termination}\;}}
    \If{$\textit{next}_{i}\in \textit{\sc Crashed}_{t}$}
    {NewSuccessor$_{i}$\;}
    \If{$\textit{next}_{i}<i$}
    {$\textit{seq}_{t}\leftarrow\textit{seq}_{t}+1$\;}
    \eIf{$\textit{\sc Report}_{i}\neq\emptyset$}
    {
    $\textit{\sc Crashed}_{t}\leftarrow\textit{\sc Crashed}_{t}\cup\textit{\sc Report}_{i}$; ~~$\textit{black}_{t}\leftarrow i$\;
    $\textit{\sc Crashed}_{i}\leftarrow\textit{\sc Crashed}_{i}\cup\textit{\sc Report}_{i}$;
    ~~$\textit{\sc Report}_{i}\leftarrow\emptyset$\;
    }
    {
    $\textit{black}_{t}\leftarrow\textit{furthest}_{i}(\textit{black}_{i},\textit{next}_{i})$\;
    }
    $\textit{send}(t, \textit{next}_{i})$\;
    $\textit{\sc Crashed}_{\textit{tBackup}}\leftarrow\textit{\sc Crashed}_t$;
    ~~$\textit{black}_{\textit{tBackup}}\leftarrow\textit{black}_t$;
    ~~$\textit{seq}_{\textit{tBackup}}\leftarrow\textit{seq}_t$\;
    \For{$j=0$\, \mbox{\bf to} $N-1$}
    {
    ~$\textit{count}_{\textit{tBackup}}^{j} \leftarrow \textit{count}_t^j$\;
    }
    ~~$\textit{black}_{i} \leftarrow i$;
    ~~$\textit{seq}_{i}\leftarrow\textit{seq}_{i}+1$\;
    }
\end{algorithm}
\end{minipage}

\begin{minipage}{0.85\textwidth}
    \begin{algorithm}[H]
        \caption{$\mbox{\rm FailureDetector}_{i}$~}
        $\textit{crashed}(j)$\;
        \If{$j\notin\textit{\sc Crashed}_{i}\cup\textit{\sc Report}_{i}$}
        {
        $\textit{\sc Report}_{i}\leftarrow\textit{\sc Report}_{i}\cup\{j\}$\;
        \If{$j=\textit{next}_{i}$}
        {NewSuccessor$_{i}$\;
        \If{$\textit{seq}_{i}>0\vee\textit{next}_{i}<i$}
        {
            $\textit{\sc Crashed}_{\textit{tBackup}}\leftarrow \textit{\sc Crashed}_{\textit{tBackup}}\cup\textit{\sc Report}_{i}$\;
            $\textit{black}_{\textit{tBackup}}\leftarrow i$\;
        \If{$\textit{next}_{i}<i$}
        {$\textit{seq}_{\textit{tBackup}}\leftarrow\textit{seq}_{i}+1$\;}
        $\textit{send}(\textit{tBackup},\textit{next}_{i})$\;
        }
        }
        }
    \end{algorithm}
 \end{minipage}

FailureDetector$_{i}$ (FD$_i$) is invoked if $i$'s failure detector  reports that a node $j$ crashed ($\textit{crashed}(j)$ in line 1). If $i$ was not yet aware of this crash (line 2), then $j$ is added to $\textit{\sc Report}_{i}$ (line 3), so that this crash will be reported to other nodes via the token. If $j$ is the successor of $i$ in the ring, NewSuccessor$_{i}$ is invoked to compute a new successor of $i$  (lines 4-5). A backup token \textit{tBackup} is sent to the new successor (line 11), if $i$ received the token at least once (first disjunct in line 6); the second disjunct in line 6 ensures a backup token is sent when the initiator crashes before ever becoming passive. $\textit{\sc Report}_{i}$ is added to $\textit{\sc Crashed}_{\textit{tBackup}}$ (line 7); nodes in $\textit{\sc Report}_{i}$ are not transposed to $\textit{\sc Crashed}_{i}$ yet, because the backup token may be discarded in favor of the original token. By $\textit{black}_{\textit{tBackup}} \leftarrow i$ (in line 8) it is guaranteed that the backup token visits all nodes up to $i$ again before termination can be detected, as all alive nodes must take into account the crash of $j$. If no alive node has an identity greater than $i$, then $\textit{seq}_{\textit{tBackup}}$ is increased by one (lines 9-10).

NewSuccessor$_{i}$ (NS$_i$) computes $i$'s new successor after $\textit{next}_{i}$ crashed. First, $\textit{next}_{i}$ is changed into $(\textit{next}_{i}+1)\bmod N$ (line 1). Then, it is repeatedly checked whether the new value of $\textit{next}_{i}$ is a crashed node (line 2), and if so its value is increased by one, modulo $N$ (line 3). After the value of $\textit{next}_{i}$ has stabilized, $i$ checks whether it is the only remaining alive node in the network (line 4), and if so, waits until it has become passive to announce termination (lines 5,7). Else, if $\textit{black}_{i}\neq i$, then $\textit{black}_{i}$ is set to $\textit{furthest}_{i}(\textit{black}_{i},\textit{next}_{i})$
(lines 8-9). Line 6, updating $\textit{\sc Crashed}_i$ right before the announcement, is new in comparison with the algorithmic description in \cite{KFF21}, and will be commented on in Section \ref{sec:mcrl2}.

 \begin{minipage}{0.85\textwidth}
    \begin{algorithm}[H]
        \caption{$\mbox{\rm NewSuccessor}_{i}$~}{}{}
        $\textit{next}_{i}\leftarrow (\textit{next}_{i}+1)\bmod N$\;
        \While{$\textit{next}_{i}\in\textit{\sc Crashed}_{i}\cup\textit{\sc Report}_{i}$}
        {$\textit{next}_{i}\leftarrow(\textit{next}_{i}+1)\bmod N$\;}
        \If{$\textit{next}_{i}=i$}
        {
        $\textit{wait}(\textit{passive}_{i})$\;
        $\textit{\sc Crashed}_i\leftarrow \textit{\sc Crashed}_i\cup\textit{\sc Report}_i$;
        ~~$\textit{\sc Report}_i\leftarrow\emptyset$\;
        {\rm Announce termination}\;
        }
        \If{$\textit{black}_{i}\neq i$}
        {
        $\textit{black}_{i}\leftarrow\textit{furthest}_{i}(\textit{black}_{i},\textit{next}_{i})$\;
        }
    \end{algorithm}
 \end{minipage}

\begin{figure*}[ht]
\begin{subfigure}{\textwidth}
    \centering
    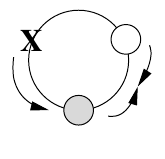
    \subcaption[]{Node 0 sends two messages and forwards the token before crashing}
    \label{ft-a}
\end{subfigure}

\vspace{4mm}

\begin{subfigure}{\textwidth}
    \centering
    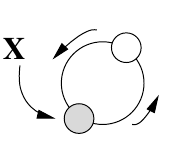
    \subcaption[]{After the crash is detected, the two messages are discarded.}
    \label{ft-b}
\end{subfigure}

\vspace{4mm}

\begin{subfigure}{\textwidth}
    \centering
    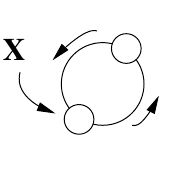
    \subcaption[]{Termination is detected in the next round}
    \label{ft-c}
\end{subfigure}
\caption{Example run on a faulty network of three nodes}
\label{example:ft}
\end{figure*}

\begin{example}
We consider one possible run of our fault-tolerant algorithm on a ring of three nodes in Figure \ref{example:ft}.
Initially all nodes are active, all counters carry the value $0$, and $\textit{black}_{i}=i$ and $\textit{seq}_{i}=0$ for $i=0,1,2$. Node $0$ sends basic messages $m$ and $m'$ to node $1$, node $1$ sends basic message $m''$ to node $2$, and node $2$ sends basic message $m'''$ to node $1$ (all with their node ID and sequence number $0$ attached); $\textit{count}_{0}^{1}$ is set to $2$, and $\textit{count}_{1}^{2}$ and $\textit{count}_{2}^{1}$ are set to $1$. Nodes $0$ and $2$ now become passive. Node $0$ sends the token to node $1$ (with $\textit{count}_{t}^{0}=2$, $\textit{count}_{t}^{1}=\textit{count}_{t}^{2}=0$, $\textit{black}_{t}=2$, $\textit{seq}_{t}=1$ and $\textit{\sc Crashed}_{t}=\emptyset$), and crashes. 
This leads to Figure \ref{ft-a} where the cross at node $0$ represents that it has crashed, the sequences of {\it count} values at alive nodes are placed between square brackets, and empty $\textit{\sc Crashed}$ and $\textit{\sc Report}$ sets at nodes have been omitted.

In  Figure \ref{ft-b}, node $2$ detects node $0$ crashed and sets $\textit{\sc Report}_{2}$ to $\{0\}$; from now on node $2$ ignores $\textit{count}_{2}^{\,0}$. Since $\textit{next}_{2}=0$, node $2$ makes node $1$ its new successor. Since $1<2$, node $2$ sends a backup token to node $1$ (with $\textit{count}_{t}^{1}=\textit{count}_{t}^{2}=0$, $\textit{black}_{t}=2$, $\textit{seq}_{t}=1$ and $\textit{\sc Crashed}_{t}=\{0\}$). Node $1$ receives $m$ from node $0$ and sets $\textit{count}_{1}^{0}$ to $-1$; since the sender of $m$ is $0<1$ and $\textit{seq}_{m}=0=\textit{seq}_{1}$, $\textit{black}_{1}$ remains unchanged; moreover, node $1$ receives $m'''$ from node $2$ and sets $\textit{count}_{1}^{2}$ to $0$; since the sender of $m'''$ is $2>1$ and $\textit{seq}_{m'''}=0=\textit{seq}_{1}$, $\textit{black}_{1}$ is set to $2$. 
Then, in  Figure  \ref{ft-c}, node $1$ receives the backup token from node $2$ and sets $\textit{\sc Crashed}_{1}$ to $\{0\}$. It becomes passive, passes on the token to node $2$ with $\textit{count}_{t}^{1}$ set to $\textit{count}_{1}^{2}=0$, and sets both $\textit{black}_{1}$ and $\textit{seq}_{1}$ to $1$. Node $1$ does not detect termination since $\textit{black}_{t}=2$. Next, node $1$ receives the original token from node $0$, which is dismissed. Node $2$ receives the token and sets $\textit{\sc Crashed}_{2}$ to $\{0\}$ and $\textit{\sc Report}_{2}$ to $\emptyset$. It does not detect termination because it sets $\textit{count}_{t}^{2}$ to $\textit{count}_{2}^{1}=1$, and $\textit{count}_{t}^{1}+\textit{count}_{t}^{2}=0+1>0$. It passes on the token to node $1$ with $\textit{black}_{t}=1$ and $\textit{seq}_{t}=2$, and sets $\textit{black}_{2}$ to $2$ and $\textit{seq}_{2}$ to $1$.

When the token arrives, node $1$ sets $\textit{\sc Crashed}_{t}$ to $\emptyset$, computes $\textit{count}_{t}^{1}=1$, passes on the token to node $2$ with $\textit{black}_{t}=2$, and sets $\textit{black}_{1}$ to $1$ and $\textit{seq}_{1}$ to $2$. In the meantime node $2$ receives $m''$ from node $1$ and sets $\textit{count}_{2}^{1}$ to $0$; since the sender of $m''$ is $1<2$ and $\textit{seq}_{m''}=0<\textit{seq}_{2}$, $\textit{black}_{2}$ remains unchanged. Node $2$ becomes passive again. When the token arrives, node $2$ computes $\textit{count}_{t}^{1}+\textit{count}_{t}^{2}=0+0=0$. Since also $\textit{black}_{t}=2$, it announces termination.\ Finally node $1$ ignores message $m'$ from node $0$, because $0\in\textit{\sc Crashed}_{1}$.

\end{example}

\begin{theorem}
The fault-tolerant version of Safra's termination detection algorithm is correct.
\end{theorem}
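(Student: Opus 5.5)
We follow the structure of the proof of the preceding theorem for the improved Safra's algorithm, establishing \emph{Liveness} and \emph{Safety} separately. For each, we reduce the fault-tolerant situation to the failure-sensitive argument, restricted to the alive nodes. The reduction rests on a few preliminary invariants, which we prove first.

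\textbf{(I1) Token uniqueness.} At any time there is at most one token with maximal $\textit{seq}_t$ that will be processed. A backup token sent by FD$_i$ carries the same contents as the last token $i$ forwarded, apart from the added crash reports, the reset $\textit{black}_t$ and possibly an incremented $\textit{seq}_t$. Line~1 of RT$_i$ discards any token with $\textit{seq}_t\neq\textit{seq}_i+1$. Hence each alive node handles at most one instance per round, and the $\textit{seq}$ counters at alive nodes stay within one of each other along the ring.

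\textbf{(I2) Token is never lost.} As long as some node is alive, either a token is in transit to an alive node, or it is held by an alive node, or an alive predecessor will detect the crash of its successor and send a backup. This follows from the perfect failure detector together with FD lines~4--11; line~6 covers the case where the initiator crashes before sending the first token.

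\textbf{(I3) Consistent crash sets.} If the token has completed a full round with no node adding a new report (lines~20--21 not triggered), then all alive nodes have the same set $\textit{\sc Crashed}_i$. Moreover, whenever a node reports a crash, $\textit{black}_t$ is set to $i$ in line~21 or FD line~8, which forces the token to revisit every alive node before termination can be announced anywhere.

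\textbf{Liveness.} Assume the basic computation has terminated. By the perfect failure detector, eventually every crash is detected, and by (I2) and (I3) the token eventually circulates with all alive nodes agreeing on a common crash set $C$ and every $\textit{\sc Report}_i$ empty. No new RBM updates occur at alive nodes: messages from nodes in $C$ are ignored, and there are no other messages in flight. Consequently, for all alive $i,j\notin C$ we have $\textit{count}_i^j+\textit{count}_j^i=0$, so the sum of $\textit{count}_t^j$ over $j\notin C$ becomes $0$ once every alive node has refreshed its entry; line~6 permits this because the $\textit{\sc Report}$ sets are empty. As in the failure-sensitive proof, the $\textit{black}$ values shrink until the token reaches a node $i$ with $\textit{black}_i=i$, and that node announces termination. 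The degenerate case where a single node remains alive is handled by NS lines~4--7.

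\textbf{Safety.} This is the main obstacle. Suppose node $i$ announces termination at line~15 with $\textit{black}_i=i$ and $\textit{sum}_i=0$. We first show that every alive node $k$ contributed its current $\textit{count}_t^k$ computed relative to the same crash set $\textit{\sc Crashed}_i$. If some node had detected a crash that $i$ does not know of, then the token, having passed through that node, would carry $\textit{black}_t$ pointing at or beyond $i$, by line~21. This would contradict $\textit{black}_i=i$.

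The subtle cases are the following:
\begin{enumerate}
\item A backup token overtakes, or is overtaken by, the original token.
\item A node in $\textit{\sc Report}_k$ whose count was not refreshed, because line~6 failed. We must show that such a stale entry is always accompanied by $\textit{black}_t\neq$ current node.
\item The sequence number increment being taken over by a new last node.
\end{enumerate}
I would handle these by strengthening (I1) into an explicit invariant relating $\textit{seq}_t$, $\textit{black}_t$ and the positions of alive nodes, and checking it against every line of RT, FD and NS.

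Given consistency, we pair the counters. For alive $j,k\notin\textit{\sc Crashed}_i$, the messages between them contribute $\textit{count}_j^k+\textit{count}_k^j$ to $\textit{sum}_i$, and this quantity equals the number of messages in transit. The exception is a receipt recorded before the corresponding send, i.e.\ an offending message. Exactly as in the proof of the preceding theorem, such a message blackens the receiver up to the sender, so $\textit{black}_i=i$ rules it out and $\textit{sum}_i\ge0$ holds. If some alive node were active, or some message between alive nodes were in transit, then the two-case argument of that proof gives $\textit{sum}_i>0$. Messages from crashed nodes are excluded by RBM line~1, so only the terminated configuration is compatible with $\textit{sum}_i=0$.
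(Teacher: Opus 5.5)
Your overall strategy is the paper's: reduce both properties to the proof of Theorem~1 restricted to alive nodes. You use backup tokens plus sequence numbers to keep a single live token, and $\textit{black}_t\leftarrow i$ (line 21 of RT, line 8 of FD) to force another round after any new crash report, so that all alive nodes count relative to the same crash set. Your list of ``subtle cases'' is left as a plan rather than proved. The paper's own treatment of line 15 of RT is no more detailed, though; it appeals to ``a similar line of reasoning''. So that is not where you fall short of it.

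The genuine gap is that your Safety part treats only the announcement in line 15 of RT. There is a second announcement, in line 7 of NS, which you mention only under Liveness. It needs its own argument, which the paper gives:
\begin{itemize}
\item after lines 1--3 of NS, $\textit{next}_i=i$ means every other node lies in $\textit{\sc Crashed}_i\cup\textit{\sc Report}_i$, and hence has crashed;
\item $i$ waits until it is passive;
\item crucially, line 6 moves $\textit{\sc Report}_i$ into $\textit{\sc Crashed}_i$ before announcing.
\end{itemize}
This matters because your closing claim, that messages from crashed nodes are excluded by line 1 of RBM, is false for senders in $\textit{\sc Report}_i$. RBM only tests membership of $\textit{\sc Crashed}_i$, and by design messages from nodes in $\textit{\sc Report}_i$ are still processed and can reactivate the receiver. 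Your argument never uses line 6 of NS, so it would apply verbatim to the version without that line, and that version is incorrect. In the $N=2$ scenario of Section~\ref{sec:mcrl2}, node $0$ announces termination while $1\in\textit{\sc Report}_0$; then node $1$'s message arrives and reactivates node $0$. The same point also affects your line-15 case: the guard in line 6 of RT allows $\textit{black}_i=i$ with $\textit{\sc Report}_i\neq\emptyset$ at the moment of announcement. There you must add that RT runs without interruption to line 22, which moves $\textit{\sc Report}_i$ into $\textit{\sc Crashed}_i$ before any further call of RBM$_i$ can occur.
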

\begin{proof}
We explain how the line of reasoning in the proof of Theorem 1 needs to be extended and adapted for the fault-tolerant version of Safra's algorithm.
Since failure detectors are (effectively) perfect, a node only ends up in a $\textit{\sc Crashed}$ set if has crashed. 
Counting basic messages between alive nodes must be performed in a consistent way at all alive nodes, so they must have a uniform view on their $\textit{\sc Crashed}$ sets. Therefore, when node $i$ reports newly detected crashed nodes in the token (line 21 of RT$_i$), it makes sure the token must visit all nodes up to $i$ again before termination can be detected (by $\textit{black}_{t} \leftarrow i$).
Furthermore, if $i$ issues a backup token (line 11 of FD$_i$) after its successor crashed, then it makes sure the token will complete a round trip (by line 8 of FD$_i$), so that this crash is accounted for at all alive nodes.

First we argue {\it Liveness}: if the basic algorithm has terminated and not all nodes in the network crash, then eventually termination will be announced by some node. We observe that the token continuously visits all consecutive alive nodes in the ring, if there are at least two of those, since NS$_i$ is called when the successor of node $i$ in the ring is found to have crashed (lines 16-17 of RT$_i$ and 4-5 of FD$_i$). Moreover, owing to the backup mechanism (lines 6-11 of FD$_i$) the token is not lost at a crash; and by managing $\textit{seq}_{t}$ and $\textit{seq}_{i}$ (in lines 18-19 and 29 of RT$_i$ and 9-10 of FD$_i$) we make sure that only a single token is being forwarded (by line 1 of RT$_i$). This means that the argumentation why the failure-sensitive version of Safra's algorithm announces termination after the basic algorithm has terminated also applies here. The only real difference is that this argumentation is restricted to the nodes that are still alive, and that termination is only announced if at least one node stays alive.

Now we argue {\it Safety}: If termination is announced, then (1) all alive nodes are passive, and (2) for all basic messages in the channels, the receiver either has crashed or knows that the sender has crashed. There are two places where a node $i$ can call ``Announce termination''. In case of line 7 of NS$_i$ we have $\textit{next}_{i}=i$ (by line 4 of NS$_i$), which implies that all other nodes have crashed. So when $i$ becomes passive (in line 5 of NS$_i$). Since moreover $\textit{\sc Crashed}_i$ is up-to-date (by line 6 of NS$_i$), ensuring that it cannot be made alive by a basic message from a crashed node, it can safely announce termination. In case of line 15 of RT$_i$ we have $\textit{sum}_{i}=0$ (by line 14 of RT$_i$), meaning that the counters of the active nodes add up to $0$ (by lines 7-9 and 11-13 of RT$_i$). With a similar line of reasoning as for the improved version of Safra's algorithm it can be argued that this implies the basic algorithm has terminated. 
\end{proof}

It can be argued that when an execution terminates, this is always detected within $\frac{N^2}{2}$ token steps. The worst case occurs if $N-1$ times a crash happens just before the token completes its round trip.
\section{Model Checking Analysis}
\label{sec:mcrl2}

We modelled the improved version of Safra's algorithm and our fault-tolerant version in mCRL2 \cite{BGKLNVWWW19} and performed a model checking analysis. In this analysis, the basic algorithm of which termination is being detected is abstracted away: Nodes can be active, passive, or crashed, and only the number of basic messages in transit in a channel is recorded. We put an (artificial) upper bound $M$ on the total number of messages sent by the basic algorithm, to achieve a finite state space for the model checking analysis. (Similarly, in the model checking analysis of Safra's algorithm in \cite{KKM22} a bound is put on the number of basic messages each node can send.) Moreover, we put a bound $S$ on the maximum number of times the token can go around the virtual token ring. Once this maximum is reached, we require that the token can only be forwarded when the basic algorithm has terminated, meaning that all alive nodes are passive and there are no more basic messages from alive nodes in transit.

As explained in Section \ref{sec:model}, a perfect failure detector is assumed which never falsely suspects that a node crashed and eventually detects each node crash. A node can crash nondeterministically, after which it no longer performs any activity. The information of such a crash is passed on to the alive nodes in an asynchronous manner, by a global failure detector process. (Alternatively, in the model checking analysis of Tseng's termination detection algorithm in \cite{HAN18} a crashed node informs the alive nodes that it has crashed.) To keep the state space at bay, nodes cannot crash after performing an internal event which affects only the local state of the node (such as updating a local variable or becoming passive/active), as opposed to external events that involve sending or receiving a basic message or announcing termination. This is so because whether a node crash occurs before or after an internal event does not make a difference for the overall system behavior.

We performed a model checking analysis by casting some desired system properties in the regular alternation-free $\mu$-calculus \cite{MateescuS03}, which is the temporal logic employed by the model checker of mCRL2. The two main properties we verified are a liveness and a safety property:
\begin{itemize}
\item[(1)]
If the execution of the basic algorithm has terminated, then termination will eventually be announced.
\item[(2)]
If termination is announced, then the execution of the basic algorithm has terminated.
\end{itemize}
We moreover verified some basic safety properties, such as that values of certain variables stay within their intended ranges.\footnote{The mCRL2 specifications and regular alternation-free $\mu$-calculus formulas underlying this model checking exercise can be found at \url{https://github.com/Andy-Tatman/FaultTolerantSafra-mCRL2-Artifact}.}

The model checking experiment was performed on a machine with a 7800x3D CPU (8 cores/16 threads) and 32GB RAM. Due to the relatively large number of parameters in the algorithm in combination with the nondeterministic behaviour of the basic algorithm, we could only model check the failure-sensitive and fault-tolerant versions of Safra's algorithm for $M=2$ and $S=2$, whereby the failure-sensitive version could be analyzed up to $N=4$ nodes and the fault-tolerant version up to $N=3$ nodes. To analyze safety properties for larger instances, we applied so-called highway search \cite{EGWW09}, whereby the state space is generated using breadth-first search, but only a limited number of states at each level is used for further exploration. By using different selection techniques or randomization for this further exploration, different searches explore different parts of the state space.

This analysis revealed one subtle flaw, in the fault-tolerant version from \cite{KFF21}, which can occur if the entire network except for one node crashes. Consider the following execution, with $N=2$. Node 1 sends a basic message $m$ to node 0 and then crashes. Node 0 becomes passive; since it holds the token, it continues with $\mbox{ReceiveToken}_0$ and sends the token to node 1. Next, node 0 detects node 1’s crash and sets $\textit{\sc Report}_0 = \{1\}$. Since $\textit{next}_0=1$, node 0 performs $\mbox{NewSuccessor}_0$, where it finds that the new value of $\textit{next}_0$ is 0 itself. Since node 0 is passive, it announces termination. Now basic message $m$ from node 1 arrives at node 0. Since 1 is not in $\textit{\sc Crashed}_0$ (but in $\textit{\sc Report}_0$), node 0 runs $\mbox{ReceiveBasicMessage}_0(m,1)$ and becomes active.

To avoid this erroneous scenario, it suffices to let node $i$ in $\mbox{NewSuccessor}_i$ update $\textit{\sc Crashed}_i$, by adding the nodes in $\textit{\sc Report}_i$ to it, right before announcing termination. For this reason, line 6 was added to $\mbox{NewSuccessor}_i$. After this adaptation, all the properties we checked in mCRL2 turned out to hold.
\section{Conclusion}
\label{sec:conclusion}

We presented a fault-tolerant algorithm for distributed termination detection based on an improved version of Safra's algorithm. In our fault-tolerant variant, message counters are maintained per node, so that counts to and from crashed nodes can be discarded. If a node crashes, the ring structure is restored locally and a backup token is sent. Strong points are: little message overhead when nodes remain active for a long time; robust against any number of and simultaneous node crashes; only one additional message per crash; the basic algorithm can be decentralized; no leader election scheme; no underflow issues. Compared to other algorithms, our algorithm generates far fewer, but larger control messages. For overall performance, fewer messages tend to be better, since more messages mean more processing at each node, as well as at the network stack.

Next to correctness proofs for the failure-sensitive and fault-tolerant versions of Safra's algorithm, we performed a model checking analysis. This analysis suffered from the Achilles heel of explicit-state model checking, state space explosion, which was circumvented by applying highway search. The analysis showed the strength of model checking in detecting a subtle flaws on small network configurations, by unveiling an erroneous corner case in the fault-tolerant version from \cite{KFF21}. We moreover showed how to fix this flaw.

Experiments in \cite{KFF21} indicate that our algorithm does not impose significant extra overhead in control messages compared to 
its failure-sensitive counterpart. Despite the $O(N)$ bit complexity of the token, the available throughput and low latency of current network technologies, as well as the low message complexity of our algorithm, may render our approach feasible for large networks. This would need to be validated in experiments with real-life distributed networks under realistic and diverse workloads on many machines. 

Since it is hard to get distributed algorithms correct, as witnessed by the subtle bug revealed by our model checking analysis, it would be useful to verify the results in this paper by putting the proofs in a proof checker like Rocq or Isabelle. Another interesting direction for future work is to develop a version of our fault-tolerant algorithm in the presence of stable storage. In that case the memory overhead of splitting the counter in the token can be avoided, at the cost of storing message counts in stable storage. 
\bibliographystyle{fundam}
\bibliography{references}

\end{document}